\newtheorem{theorem}{Theorem}
\newtheorem{remark}{Remark}
\begin{document}

\title{Hybrid Resource Scheduling for Aggregation in Massive Machine-type Communication Networks}

\author{
	\IEEEauthorblockN{Onel L. Alcaraz López, %
		Hirley Alves, 
		Pedro H. J. Nardelli,
		Matti Latva-aho\\
	}
	\thanks{Onel L. Alcaraz López, Hirley Alves and Matti Latva-aho are with the Centre for Wireless Communications (CWC), University of Oulu, Finland.\{onel.alcarazlopez,hirley.alves,matti.latva-aho\}@oulu.fi}
	\thanks{Pedro H. J. Nardelli is with Laboratory of Control Engineering and Digital Systems, Lappeenranta University of Technology, Finland.  pedro.nardelli@lut.fi}
	\thanks{This work is partially supported by Academy of Finland (Aka) (Grants n.303532, n.307492, n.318927 (6Genesis Flagship)), SRC/Aka BCDC Energy (n. 292854),  as well as the Finnish Foundation for Technology Promotion, the Finnish Funding Agency for Technology and Innovation (Tekes), Bittium Wireless, Keysight Technologies Finland, Kyynel, MediaTek Wireless, Nokia Solutions and Networks.}
}

\maketitle

\begin{abstract}
Data aggregation is a promising approach to enable massive machine-type communication (mMTC). Here, we first characterize the aggregation phase where a massive number of machine-type devices transmits to their respective aggregator. By using non-orthogonal multiple access (NOMA), we present a hybrid access scheme where several machine-type devices (MTDs) share the same orthogonal channel. Then, we assess the relaying phase where the aggregatted data is forwarded to the base station. The system performance is investigated in terms of average number of MTDs that are simultaneously served under imperfect successive interference cancellation (SIC) at the aggregator for two scheduling schemes, namely random resource scheduling (RRS) and channel-dependent resource scheduling (CRS), which is then used to assess the performance of data forwarding phase.
\end{abstract}
\begin{IEEEkeywords}
	data aggregation, mMTC, resource scheduling, NOMA
\end{IEEEkeywords}
\section{Introduction}
Machine-type Communication (MTC) is an inherent part of the fifth generation (5G) cellular networks \cite{Atzori.2017,Chen.2014}, covering automatic data generation, exchange, processing and actuation that are the basis of intelligent machine networks. Such MTC networks are growing
at an impressive rate and some predictions are pointing to 20 billion machine-type devices (MTDs) connected to wireless networks in 2023 and beyond \cite{Ericsson.2017}.
Massive Machine-type Communication (mMTC) are envisaged to cope with that large number of, often low-complexity low-power, MTDs that are becoming part of wireless networks \cite{Dawy.2017}. 
A survey on the requirements, technical challenges, and existing work on medium access control (MAC) layer protocols for supporting these new use cases, is presented in \cite{Rajandekar.2015}, while authors describe also the issues related to efficient, scalable, and fair channel access. In fact, different strategies have been proposed to provide more efficient access,  e.g., access class barring \cite{ETSI1}, prioritized random access \cite{Lin.2014}, backoff adjustment scheme \cite{Yang.2012}, delay-estimation based random access \cite{Hossain.2016}, distributed queuing \cite{Laya.2016}, data aggregation \cite{Chen.2014,Shariatmadari.2015}.
Data aggregation consists in MTDs that organize themselves locally to MTC area networks,  then, the area networks connect to the core network through MTC gateways or data aggregators. This alleviates the problem of massive signaling overhead on the architectural side and it is a key solution strategy to collect, process, and communicate data in MTC use cases with static devices, especially if the locations of the devices are known, such as smart utility meters or video surveillance cameras
\cite{nardelli2016maximizing,Dawy.2017,Ramezanipour.2018}. 

In \cite{Boubiche.2018}, authors survey data aggregation strategies in large-scale wireless sensor networks (WSNs), while focusing on the processing challenges behind the large volume of data. In \cite{Kouzayha.2014}, an experimental study using state-of-the-art drive testing equipment is conducted in order to capture and analyze the impact of MTC data aggregation on signaling overhead in cellular networks with focus on static MTDs such as smart meters and monitoring sensors. Authors of \cite{Riker.2014} present a scheme designed to provide data aggregation for heterogeneous and concurrent sets of Constrained Application Protocol \cite{Bormann.2012} (CoAP) data-requests. The problem of energy-optimal routing and multiple-sink aggregation is investigated in \cite{Fitzgerald.2018}, as well as joint aggregation and dissemination of sensor measurement data in MTC edge networks. An aggregation scheme is proposed in \cite{Shariatmadari.2015} for capillary networks connected to the LTE network to improve their communication efficiency. Authors analyze the trade-offs between random access interaction, resource allocation, and communication latency, and reveal that accepting the extra latency for accumulating packets can significantly reduce the random access requests and the required resources for the data transmissions.

Notice that when aggregating a massive number of MTDs, the density of the aggregators, although it is considerably smaller compared to the density of the MTDs, will still be large. Hence, the interference generated by the devices sharing the same resource is not negligible. There is, though, limited literature considering the interference in mMTC with data aggregation and resource scheduling. Authors in \cite{Guo.2017} partially address those issues by considering a multi-cell network scenario, whose key metrics (namely MTD success probability, average number of successful MTDs and probability of successful channel utilization) are investigated for the random resource scheduling (RRS) and channel-dependent resource scheduling (CRS) schemes. 
 
Another technique called non-orthogonal multiple access (NOMA) is seen as a promising technology for the 5G networks to improve the system spectral efficiency while meeting the demand of massive connectivity demanded by certain MTC applications (e.g \cite{Shirvanimoghaddam.2016}). The key idea behind NOMA is to exploit the power domain for multiple access so that multiple users can be multiplexed at different power levels, but at the same time/frequency/code  employing SIC to separate superimposed messages \cite{Saito.2013_2}. The performance of NOMA is evaluated in \cite{Ding.2014,Zhang.2016} by using the stochastic geometry tools. However, the inter-cell interference, which is a pervasive problem in most of the existing wireless networks, is not explicitly considered in \cite{Ding.2014}, as many other works on NOMA. In contrast, authors in \cite{Zhang.2016} do consider the inter-cell interference when evaluating the performance of NOMA on coverage probability and average achievable rate, but on a downlink setup. In \cite{Lopez.2017}, we propose and analyze a hybrid OMA-NOMA scheme for mMTC uplink scenarios by extending the scheduling schemes RRS and CRS initially proposed in \cite{Guo.2017}. Therein we deal with the limited resources and allow up to two MTDs to share the same orthogonal channel while we consider imperfect SIC. We show that even when the hybrid scheme would lead to a less reliable system with greater chances of outages per MTD, due to the additional intra-cluster interference, it can significantly improve the number of simultaneous active MTDs for high access demand scenarios. 

Differently from \cite{Lopez.2017}, in this work we allow the RRS scheme to control the power coefficients of the MTDs sharing the same channel, thus both, RRS and CRS, have the same impact in terms of interference generated on the outside network (network outside of the aggregation zone of interest). Additionally, here we are agnostic of the outside network topology, nonetheless our proposed model captures the interference coming from outside. We also include the evaluation of the relaying phase of the aggregated data to the base stations, while we focus on the average number of simultaneous active MTDs. That allow us to highlight the advantages of our scheme which aims to provide massive connectivity in scenarios with high access demand, which is not covered by usual OMA setups. Although CRS achieves better performance by providing access to the MTDs with best channels, RRS could be more practical since the random pairing could model scenarios where some MTDs have urgency to be served. Results show that failing to efficiently eliminate the intra-cluster interference could reduce significantly the benefits from NOMA while challenging its practical implementation, thus, power control plays a main role in these systems. Finally, we attain approximated, yet accurate, expressions when analyzing the CRS scheme. In contrast to the time-consuming Monte-Carlo simulations, our analytical derivations allow for fast computation.

Next, Section~\ref{system} introduces the system model. Section~\ref{Agg} discusses the RRS and CRS scheduling schemes for the aggregation phase, while Section~\ref{rel} analyses the relaying phase and the overall system performance. Section~\ref{results} presents the numerical results and Section~\ref{conclusions} concludes the paper.
\begin{figure*}[t!]
	\centering
	\subfigure{\includegraphics[width=0.7\textwidth]{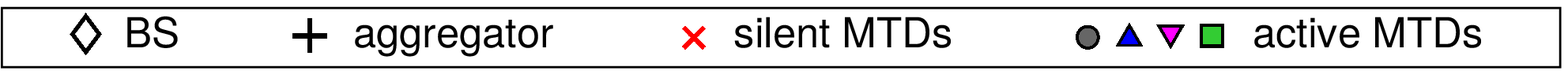}}\\
	\subfigure{\includegraphics[width=0.7\textwidth]{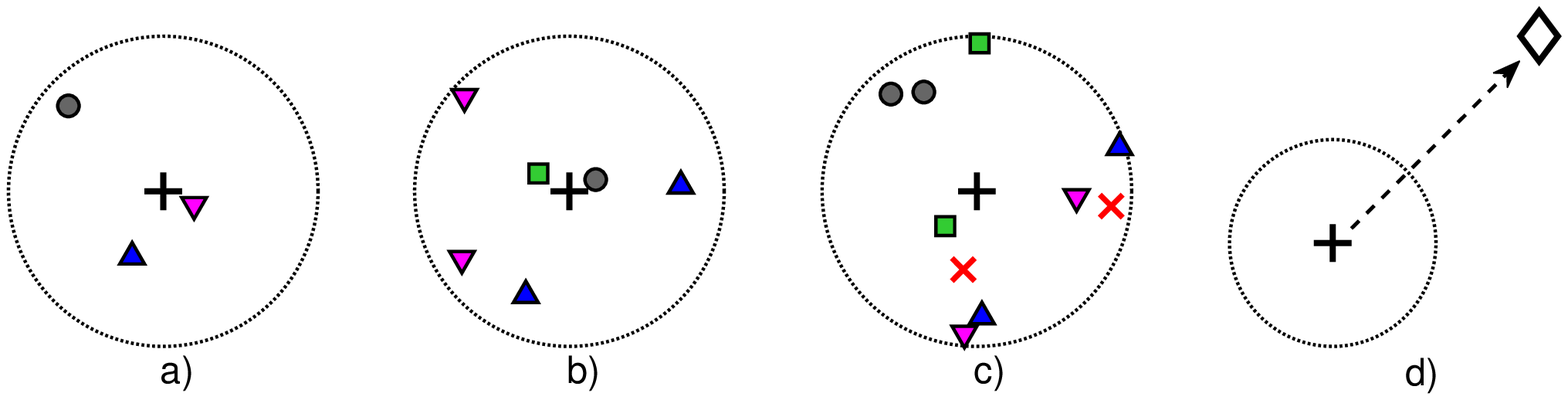}}
	\caption{a), b) and c) Snapshot of the aggregation phase with $\bar{m}=6$, $L=2$ and $N=4$. MTDs with the same shape and color are using the same channel across the entire network. (Different realizations: a) $K<N$, b) $N<K<L\cdot N$, c) $K>L\cdot N$). d) Snapshot of the relaying phase.}		
	\label{model}
\end{figure*}

\textbf{Notation:} $\mathbb{E}[\!\ \cdot\ \!]$ denotes expectation, $\Pr(A)$ is the probability of event $A$, $\Pr(A|B)$ is the $\Pr(A)$ conditioned on $B$. $\mathbbm{1}(\cdot)$ is an indicator function which is equal to $1$ if its argument is true and $0$ otherwise; while $\binom{n}{k}=\frac{n!}{k!(n-k)!}$. $\Gamma(x)$ is the gamma function, $\psi(x)=\tfrac{d[\ln(\Gamma(x))]}{dx}$ is the digamma function,  $Q(a,x)=\tfrac{1}{\Gamma(a)}\int_{x}^{\infty}t^{a-1}e^{-t}\mathrm{d}t$ is the  regularized incomplete gamma function, and $\ _1F_1(a;b;z)$ is the Kummer confluent hypergeometric function. $\mathbf{i}=\sqrt{-1}$ and $\mathrm{Im}\{z\}$ is the imaginary part of $z\in\mathbb{C}$. $f_X(x)$ and $F_X(x)$ are the Probability Density Function (PDF) and Cumulative Distribution Function (CDF) of random variable (RV) $X$, respectively. {$X\sim\mathrm{Exp}(1)$ is an exponential distributed RV with unit mean, e.g., $f_X(x)=e^{-x}$ and $F_X(x)=1-e^{-x}$; while $Y\sim\mathrm{Poiss}(\bar{m})$ is a Poisson distributed RV with mean $\bar{m}$, e.g., $\Pr(Y\!=\!y)\!=\!\tfrac{1}{y!}\bar{m}^ye^{-\bar{m}}$ and $F_Y(y)\!=\!Q(y\!+\!1,\bar{m})$.}
\section{System Model and Assumptions}\label{system}
Consider interference-limited\footnote{The interference from other MTDs is much larger than the white noise in the receivers and, therefore, can be ignored. However, notice that the impact of the noise can be easily incorporated into our analysis.} uplink of MTC network is divided into two phases. In the first phase (aggregation phase), the MTD tries to transmit its data with fixed payload size $b$ (bits) to its serving aggregator. The MTDs are served through $N$ orthogonal channels as in \cite{Guo.2017}; however, here the same orthogonal channel could be used for more than one MTD. When the access demand is not so high, the aggregator will be allocating one MTD per channel. But, when the access demand exceeds the availability of orthogonal channels, some MTDs are allowed to share the same orthogonal channel. This scheme is our proposed hybrid OMA-NOMA multiple access scenario \cite{Lopez.2017}. The number of MTDs requiring service is modeled as $K\sim\mathrm{Poiss}(\bar{m})$. The maximum number of users per orthogonal channel is $L$, where $L=1$ reduces to an OMA scenario, and for simplicity we focus on the $L=2$ setup. Furthermore, the scenario with $L=2$ seems more practical than $L>2$ when we take into consideration the processing
complexity for SIC receivers, especially when SIC error propagation is considered as discussed in \cite{Liu.2016}. Figs.\ref{model}a-c show snapshots of the considered aggregation phase for three different realizations. The silent MTDs are those out of the $N\cdot L$ available resources being used by the active MTDs. The aggregator implements the resource scheduling according to one of the schemes presented in Section~\ref{Agg}, and the MTDs considered are those with granted access since the random access in the network is assumed to be performed\footnote{The resource scheduling schemes require that synchronization procedures, as well as the random access stage, are performed in advance. In fact, the work in \cite{Shirvanimoghaddam.2017} proposes a NOMA scheme allowing the combination of random access and data transmissions phases, where our resource scheduling schemes can be easily incorporated to improve the overall system performance. The details of such implementation are out of the scope of this work.} as in \cite{Guo.2017,Malak.2016,Lopez.2017}.

After aggregating the MTDs' data, the aggregator acts as an ordinary cellular user and relays the entire information to its associated BS in the second phase (relaying phase) as shown in Fig.~\ref{model}d. For the aggregation phase with $L>1$, there is both: outside interference $\rho I_{o,1}$ (i.e. interference from MTDs operating on the same channels but being served by other aggregators), and inside interference (i.e. interference from MTDs within the serving area of the aggregator), which are both RVs. For the relaying phase, let $\rho I_{o,2}$ be the interference at the BS from aggregators in other cells operating with the same channel resources. 
Notice that $I_{o,x}$ relies on the outside network topology, which is assumed unknown, but with a Laplace transform of $I_{o,x}$ in the form of 
\begin{equation}
\mathcal{L}_{I_{o,x}}(s)\!=\!  \exp\left( {\!- \phi_x\Gamma\big(1\!+\!\tfrac{2}{\alpha}\big) \Gamma\big(1\!-\!\tfrac{2}{\alpha}\big) s^{\frac{2}{\alpha}}}\right),\ x=\{1,2\},\label{LIo}
\end{equation}
which is a established result from the stochastic geometry for wireless networks generated as PPP with Rayleigh fading, where $\alpha$ is the path-loss exponent,  while $\phi_x$ accounts for network density, characteristics of the aggregation/relaying areas, and others \cite{Haenggi.2012,Guo.2017}. Also, \eqref{LIo} holds under the assumption of using  statistical full inversion power control \cite{Gharbieh.2017} with parameter $\rho$, as we assume here to  guarantee a uniform user experience while saving valuable energy.
The latter implies that devices control their transmit power such that the average signal power received at the serving aggregator/BS is equal to the predefined value $\rho$. Thus, the  instantaneous received power at the receiver side is $\rho h$, where $h\sim\mathrm{Exp}(1)$ is the channel power gain  under quasi-static Rayleigh fading, and $\rho$ does not impact the performance since we assume the network as interference-limited. 

Notice that the process $\phi_x$ could also be dependent of $\alpha$, as discussed in \cite{Haenggi.2014,Ganti.2016}. In that case, the Laplace transform of the interference for different point processes appear to be merely horizontally shifted versions of each other (in dB) as long as their diversity gain is the same. Thus, scaling the threshold $s$ by this SIR gain factor $\beta$,\footnote{$\beta$ will also depend on $s$, but finding $\beta$ for a fixed $s$ already gives a good approximation \cite{Ganti.2016}.} we get $(\beta s)^{\frac{2}{\alpha}}$, where $\beta^{\frac{2}{\alpha}}$ would be included in $\phi_x$. By properly selecting $\phi_x$, the outside interference for any given topology could be then characterized. Finally, full channel state information (CSI) is assumed at receiver side as in \cite{Guo.2017,Ding.2014,Zhang.2016}.
\section{Aggregation Phase}\label{Agg}
In this section we discuss the RRS and CRS scheduling schemes for our hybrid access protocol.
\subsection{RRS for the Hybrid Access}\label{RRS}
Under the RRS scheme, $N$ out of the $K$ instantaneous MTDs requiring transmissions are independently and randomly chosen and then matched, one-to-one, with the orthogonal channels. If $K\!\le\!N$, all MTDs get channel resources, and even $N\!-\!K$ channels will be unused. Otherwise, if $K\!>\!N$,  the channel allocation is executed again by allowing the remaining MTDs to share channels with the already served MTDs. This process is executed repeatedly until all the MTDs are allocated or the maximum number of MTDs per channel, $L$, is reached for all the channels. The inside interference, coming from the MTDs within the same aggregation zone and sharing the same channel, is faced with SIC. The SIR, $\mathrm{SIR}^r_{j,u}$, of the $j$th MTD being decoded on a typical channel, given the number of MTDs $u$ on the same channel and the RRS scheme, is $\mathrm{SIR}^r_{1,1}=\frac{h}{I_{o,1}}$, while
\begin{align}
\mathrm{SIR}^r_{1,2}&=\frac{a_1\max(h_1,h_2)}{I_{o,1}+a_2\min(h_1,h_2)},\label{SIR12}\\
\mathrm{SIR}^r_{2,2}&=\frac{a_2\min(h_1,h_2)}{I_{o,1}+\mu a_1\max(h_1,h_2)},\label{SIR22}
\end{align}
where $\mu\in[0,1]$ is used to model the impact caused by imperfect SIC \cite{Sun.2016}, while $h_1$ and $h_2$ are the channel power coefficients of both MTDs sharing the channel when $u=2$. 

We can weight the power of coexistent nodes on the same channel through $a_1$ and $a_2$ coefficients. Of course, some kind of feedback from the aggregators would be required after pairing the MTDs\footnote{Since up to 2 MTDs can be scheduled to transmit over the same channel, acquiring and using CSI at the MTD side is not appropriate. Instead, the aggregator should acquire the CSI and use it for the scheduling and for determining the power control coefficients; while finally forwarding such information back to the MTDs.}. By letting $a_1+a_2=\delta$ be a fixed value we impose some kind of total transmission power constraint. This is crucial for NOMA scenarios, and here it is particular important in order to control the interference generated on close aggregators in the outside area\footnote{Notice that power constraints are usually linked to each device individually since they are mostly related to hardware limitations. In fact $a_1,a_2\le\delta$,  therefore, we are implicitly considering also individual power constraints. However, since one channel may be occupied by 2 MTDs, by setting $a_1+a_2=\delta$ we are able of controlling the interference generated on the given channel on close aggregators in the outside area, and even if $\delta\approx 1$ we are making it comparable to the interference that would generate a single MTD if operating alone in that channel.}. Also, $\lim\limits_{I_{o,1}\rightarrow 0}\mathrm{SIR}_{1,2}^r$ is unbounded, but $\lim\limits_{I_{o,1}\rightarrow 0}\mathrm{SIR}_{2,2}^r\le \frac{a_2}{a_1\mu}$ since $\min(h_1,h_2)\le\max(h_1,h_2)$, thus the performance of the second MTDs being decoded on the channel is strongly limited by the SIC imperfection parameter, but by properly selecting $a_1,a_2$ that situation can be relaxed. Consider fixed rate coding scheme where the receiver decodes successfully if the SIR exceeds a threshold $\theta>0$, achieving the information rate of $\log_2(1+\theta)$ [bits/symbol], we state the following theorem.
\begin{theorem}\label{the1}
	The RRS success probability, $p^r_{j,u}$, of the $j$th MTD sharing a typical channel conditioned on $u$ MTDs, is given by
	\begin{align}
	p^r_{1,1}&\!=\!\mathcal{L}_{I_{o,1}}(\theta),\label{p11}\\
	p^r_{1,2}&\!=\!\!\left\{\!\! \begin{array}{ll}\!
	\frac{2a_1}{a_1\!+\!\theta a_2}\mathcal{L}_{I_{o,1}}(\tfrac{\theta}{a_1})\!-\!\frac{a_1\!-\!\theta a_2}{a_1\!+\!\theta a_2}\mathcal{L}_{I_{o,1}}\Big(\!\frac{2\theta}{a_1\!-\!\theta a_2}\!\Big),&\!\!\! \mathrm{if}\ 0\!\le\!\tfrac{\theta a_2}{a_1}\!<\!1\\
	\frac{2a_1}{a_1+\theta a_2}\mathcal{L}_{I_{o,1}}(\tfrac{\theta}{a_1}),&\!\!\! \mathrm{if}\ \tfrac{\theta a_2}{a_1}\ge 1
	\end{array}
	\right.\!\!\!\!\!,\label{p12}		\\
	p^r_{2,2}&=\!\left\{ \begin{array}{ll}
	\frac{a_2-\theta\mu a_1}{a_2+\theta\mu a_1}\mathcal{L}_{I_{o,1}}\Big(\frac{2\theta}{a_2-\theta\mu a_1}\Big),& \mathrm{if}\ 0\le\tfrac{\theta\mu a_1}{a_2}<1\\
	0,& \mathrm{if}\ \tfrac{\theta\mu a_1}{a_2}\ge 1
	\end{array}
	\right.\!\!.\label{p22}
	\end{align}
\end{theorem}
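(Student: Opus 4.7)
The plan is to evaluate, in turn, the tail probability $\Pr(\mathrm{SIR} > \theta)$ for each of the three SIR expressions, condition on the outside interference $I_{o,1}$, and finally recognize the resulting expectation over $I_{o,1}$ as the Laplace transform $\mathcal{L}_{I_{o,1}}(\cdot)$. For $p^r_{1,1}$, since $h\sim\mathrm{Exp}(1)$ and $h$ is independent of $I_{o,1}$, I would simply write $p^r_{1,1}=\Pr(h>\theta I_{o,1})=\mathbb{E}[e^{-\theta I_{o,1}}]=\mathcal{L}_{I_{o,1}}(\theta)$, giving \eqref{p11}.

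For the two-MTD cases I would exploit the standard decomposition of the order statistics of two i.i.d.\ exponentials: $h_{\min}:=\min(h_1,h_2)\sim\mathrm{Exp}(2)$ is independent of $Z:=h_{\max}-h_{\min}\sim\mathrm{Exp}(1)$, so that $h_{\max}=h_{\min}+Z$. Substituting this decomposition into \eqref{SIR12}--\eqref{SIR22} turns each success event into a linear inequality whose conditional tail (given the remaining variables) is a single exponential, which is convenient to integrate out.

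For $p^r_{1,2}$, the event becomes $Z>\tfrac{\theta}{a_1}I_{o,1}+\tfrac{\theta a_2-a_1}{a_1}h_{\min}$. I would split according to the sign of $a_1-\theta a_2$. When $\theta a_2/a_1\ge 1$, the right-hand side is nonnegative for all $h_{\min}\ge 0$, so the exponential tail of $Z$ applies directly; integrating out $h_{\min}\sim\mathrm{Exp}(2)$ and taking expectation over $I_{o,1}$ immediately yields the second line of \eqref{p12}. When $\theta a_2/a_1<1$ the right-hand side can be negative, so the $h_{\min}$-integral must be split at $y^{*}=\theta I_{o,1}/(a_1-\theta a_2)$: on $[0,y^{*}]$ the exponential tail of $Z$ is active while on $[y^{*},\infty)$ the event holds trivially (contributing the density of $h_{\min}$ alone). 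This case split is the main obstacle, as one must keep the indicator regions and the resulting cross-exponentials aligned; I would verify the algebraic collapse using the identity $y^{*}(1+\theta a_2/a_1)+\theta I_{o,1}/a_1=2y^{*}$, which makes the boundary terms combine into the two clean exponentials $\tfrac{2a_1}{a_1+\theta a_2}e^{-\theta I_{o,1}/a_1}$ and $-\tfrac{a_1-\theta a_2}{a_1+\theta a_2}e^{-2\theta I_{o,1}/(a_1-\theta a_2)}$. Taking expectation over $I_{o,1}$ then reproduces the first line of \eqref{p12}.

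For $p^r_{2,2}$, the event rewrites as $(a_2-\theta\mu a_1)h_{\min}>\theta I_{o,1}+\theta\mu a_1 Z$. If $\theta\mu a_1/a_2\ge 1$, the left-hand coefficient is nonpositive while the right-hand side is nonnegative, so $p^r_{2,2}=0$, matching the second branch of \eqref{p22}. Otherwise, using $h_{\min}\sim\mathrm{Exp}(2)$, the conditional probability given $(Z,I_{o,1})$ is $\exp\!\bigl(-\tfrac{2\theta}{a_2-\theta\mu a_1}I_{o,1}-\tfrac{2\theta\mu a_1}{a_2-\theta\mu a_1}Z\bigr)$. Integrating out $Z\sim\mathrm{Exp}(1)$ produces the prefactor $\tfrac{a_2-\theta\mu a_1}{a_2+\theta\mu a_1}$, and taking expectation over $I_{o,1}$ produces $\mathcal{L}_{I_{o,1}}\!\bigl(\tfrac{2\theta}{a_2-\theta\mu a_1}\bigr)$, recovering \eqref{p22}.
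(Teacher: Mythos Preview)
Your argument is correct and yields exactly the three expressions in the theorem, but it proceeds by a genuinely different route than the paper. The paper introduces the auxiliary variables $V_1=\max(h_1,h_2)-\tfrac{\theta a_2}{a_1}\min(h_1,h_2)$ and $V_2=\min(h_1,h_2)-\tfrac{\theta\mu a_1}{a_2}\max(h_1,h_2)$ and derives their CDFs by a two-dimensional region analysis in the $(h_1,h_2)$ plane (with a case split on the sign of $1-\tfrac{\theta a_2}{a_1}$ and explicit geometric figures), then substitutes those CDFs and averages over $I_{o,1}$. You instead exploit the R\'enyi/Sukhatme decomposition of exponential order statistics, writing $h_{\max}=h_{\min}+Z$ with $h_{\min}\sim\mathrm{Exp}(2)$ independent of $Z\sim\mathrm{Exp}(1)$, which reduces every success event to a single one-dimensional exponential integral (with one threshold split for $p^r_{1,2}$). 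Your route is shorter and avoids the geometric bookkeeping entirely, while the paper's route is more self-contained in that it does not invoke the independence property of exponential spacings; both, of course, land on the same conditional tail $\tfrac{2a_1}{a_1+\theta a_2}e^{-\theta I_{o,1}/a_1}-\tfrac{a_1-\theta a_2}{a_1+\theta a_2}e^{-2\theta I_{o,1}/(a_1-\theta a_2)}$ (resp.\ its analogues) before the final Laplace-transform step.
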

\begin{proof}
	See Appendix~\ref{App_B}.\phantom\qedhere
\end{proof}
\begin{remark}\label{remark1}
	As long as $\theta^2\mu<1$, it is advisable choosing $a_1,a_2$ such that $\theta<\tfrac{a_1}{a_2}<\tfrac{1}{\theta\mu}$, and both MTDs operating on the same channel get the success probability shown in the first line of \eqref{p12} and \eqref{p22}. Notice that by going closer to $\tfrac{1}{\theta \mu}$ we favor the first MTD being decoded, while if we choose a smaller $\tfrac{a_1}{a_2}$, the second MTDs benefits. However, finding the values of $a_1$ and $a_2$ for which the MTDs could perform with similar reliability for any setup, seems intractable. 
\end{remark}

\begin{theorem}\label{the2}
	The Probability Mass Function (PMF) of the number of active MTDs, $K_1^r$, is given in \eqref{cor1} at the top of the next page, where $f_1(k_1^r)=\min(k_1^r,2N-k)$, $f_2(k_1^r,r_1)=\min(k_1^r-r_1,k-N)$ and $f_3(k_1^r)=\min(k_1^r,N)$.
\end{theorem}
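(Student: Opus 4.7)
The proof plan is to condition on the Poisson variable $K$ and, for each outcome $K=k$, decompose $K_1^r$ into a convolution of binomial counts whose Bernoulli parameters are supplied by Theorem~\ref{the1}, then average against the Poisson PMF of $K$.

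First, I would identify the three deterministic occupancy regimes illustrated in Figs.~\ref{model}a--c. Under RRS, the scheduler places MTDs one per channel on a first pass, and then pairs any leftover device with an already occupied channel on a second pass, stopping once the capacity $LN=2N$ is reached. Conditioned on $K=k$, this yields $(c_1,c_2)=(k,0)$ if $k\le N$, $(c_1,c_2)=(2N-k,k-N)$ if $N<k\le 2N$, and $(c_1,c_2)=(0,N)$ with $k-2N$ silent devices if $k>2N$, where $c_1$ and $c_2$ denote the numbers of single-MTD and paired channels, respectively, so that the number of served MTDs is $\min(k,2N)$.

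Second, for a fixed $(c_1,c_2)$ the number of successfully decoded MTDs decomposes as $K_1^r=R_1+R_2+R_3$ with
\begin{align*}
R_1\sim\mathrm{Bin}(c_1,p^r_{1,1}),\ R_2\sim\mathrm{Bin}(c_2,p^r_{1,2}),\ R_3\sim\mathrm{Bin}(c_2,p^r_{2,2}),
\end{align*}
corresponding to solo channels, first-decoded MTDs on paired channels, and second-decoded MTDs on paired channels, with marginal success probabilities taken from Theorem~\ref{the1}. Events on different orthogonal channels are independent because the outside interference entering \eqref{LIo} is an independent PPP thinning on each sub-band and the intra-channel fades are i.i.d.\ across channels. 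Imposing $R_1+R_2+R_3=k_1^r$ together with $0\le R_1\le c_1$, $0\le R_2\le c_2$, and $0\le k_1^r-R_1-R_2\le c_2$ produces exactly the cutoffs $R_1\le f_1(k_1^r)=\min(k_1^r,2N-k)$ and $R_2\le f_2(k_1^r,R_1)=\min(k_1^r-R_1,k-N)$ in the middle regime, and $R_2\le f_3(k_1^r)=\min(k_1^r,N)$ in the heavy-load regime where $R_1\equiv 0$, while in the light regime the convolution collapses to a single binomial term with limit $\min(k_1^r,k)$.

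Third, I would multiply the conditional convolutions by the Poisson weight $\bar m^k e^{-\bar m}/k!$ and sum $k$ over the three disjoint ranges $\{k_1^r,\ldots,N\}$, $\{N+1,\ldots,2N\}$, and $\{2N+1,\ldots,\infty\}$; the resulting three summations correspond one-to-one with the three lines displayed in \eqref{cor1}. The main obstacle is the intra-channel statistical coupling on paired channels: the Bernoulli indicators for the first- and second-decoded MTDs share the same $I_{o,1}$, $h_1$ and $h_2$, so $R_2$ and $R_3$ are not genuinely independent. I would resolve this either by verifying that after integrating out the inside Rayleigh fades (as already carried out in the proof of Theorem~\ref{the1} in Appendix~\ref{App_B}) the remaining per-channel factorization in terms of $\mathcal{L}_{I_{o,1}}$ yields the claimed product form, or, as is customary in this line of work, by adopting the marginal-independence assumption used throughout the paper and cross-validating it via the Monte-Carlo experiments in Section~\ref{results}.
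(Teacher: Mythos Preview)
Your proposal is correct and mirrors the paper's proof in Appendix~\ref{App_B2}: condition on $K$, split into the three occupancy regimes, write $K_1^r$ as a convolution of independent binomials (one per channel type and decoding order), and then average over the Poisson law of $K$ to obtain \eqref{cor1}. Two small notes: the boundary case $k=2N$ belongs to the heavy-load regime (so the middle sum in \eqref{cor1} runs only to $2N-1$, not $2N$), and the paper silently adopts exactly the marginal-independence treatment of $R_2$ and $R_3$ that you flag in your last paragraph, offering no justification beyond the Monte-Carlo validation in Section~\ref{results}.
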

\begin{proof}
	See Appendix~\ref{App_B2}.\phantom\qedhere
\end{proof}
\begin{figure*}[!t]
	\footnotesize
	\begin{align}\label{cor1}
	&\Pr(K_1^r\!=\!k_1^r)=
	e^{-\bar{m}p_{1,1}^r}(\bar{m}p_{1,1}^r)^{k_1^r}\Bigg[\frac{1}{k_1^r!}-\frac{(N-k_1^r+1)\binom{N+1}{k_1^r}\Big((N-k_1^r)!-\Gamma\big(N-k_1^r+1,\bar{m}(1-p_{1,1}^r)\big)\Big)}{(N+1)!}\Bigg]+\sum_{k=N+1}^{2N-1}\sum\limits_{r_1=0}^{f_1(k_1^r)}\!\sum\limits_{r_2=0}^{f_2(k_1^r,r_1)}\!\Bigg[ \nonumber\\
	&\qquad\qquad\qquad\mathbbm{1}\big(k_1^r\!\le\! k\!-\!N\!+\!r_1\!+\!r_2\!\big)\!\binom{2N\!-\!k}{r_1}\!\binom{k\!-\!N}{r_2}\!\binom{k\!-\!N}{k_1^r\!-\!r_1\!-\!r_2}\!(p_{1,1}^r)^{r_1}\!(p_{1,2}^r)^{r_2}\!(p_{2,2}^r)^{k_1^r\!-\!r_1\!-\!r_2}\!(1\!-\!p_{1,1}^r)^{2N\!-\!k\!-\!r_1}\!(1\!-\!p_{1,2}^r)^{k\!-\!N\!-\!r_2}\cdot\nonumber\\
	&\cdot(1\!-\!p_{2,2}^r)^{k\!-\!N\!-\!k_1^r\!+\!r_1\!+\!r_2}\frac{e^{-\bar{m}}\bar{m}^k}{k!}\Bigg]\!+\!\big(1-Q(2N,\bar{m})\big)\sum\limits_{r_1=0}^{f_3(k_1^r)}\!\!\!\mathbbm{1}\big(k_1^r\!\le\! N\!+\!r_1\big)\binom{N}{r_1}\binom{N}{k_1^r\!-\!r_1}(p_{1,2}^r)^{r_1}(p_{2,2}^r)^{k_1^r\!-\!r_1}(1\!-\!p_{1,2}^r)^{N\!-\!r_1}(1\!-\!p_{2,2}^r)^{N\!-\!k_1^r\!+\!r_1}. 
	\end{align}	
	\hrule
\end{figure*}
\subsection{CRS for the Hybrid Access}\label{CRS}
The CRS scheme seeks to make better use of channel resources by strongly relying on all the CSI available for scheduling. The MTD with better fading (equivalently, better SIR) will be preferentially assigned with the available channel resources. An aggregator with $K$ instantaneous  MTDs requiring transmission has the knowledge of their fading gains. Let $\{h_1,...,h_i,...,h_K\}$ denote the decreasing ordered channel gains, where $h_{i-1}>h_i$. If $K\le N$ all the MTDs will be chosen, but if $K>N$ the aggregator will pick the $N$ MTDs with better channel gains, i.e., $h_1,...,h_N$, and then will assign randomly the orthogonal channels to them \cite{Guo.2017}. As a continuation, the remaining MTDs can  be still allocated sharing those same resources, i.e., users $N+1$,...,$K$ go to the second round for allocation. This process is executed repeatedly until all the MTDs are allocated or the maximum number of MTDs per channel, $L$, is reached \cite{Lopez.2017}.\footnote{Notice that imperfect CSI would not only affect the information decoding procedure under this scheme, but also the resource scheduling stage since the channel coefficients' ordering may be affected. A detailed performance analysis under imperfect CSI is regarded as our future work.} 

Under the CRS scheme and using SIC to face the inside interference, the SIR, $\mathrm{SIR}^{c\ (i)}_{j,u}$, of the $j$th MTD being decoded on a typical channel, given the first MTD allocated there has the $i$th larger channel coefficient, $h_i$, and there are $u$ MTDs sharing that same channel, is given by $\mathrm{SIR}^{c\ (i)}_{1,1}=\frac{h_i}{I_{o,1}}$ and 
\begin{align}
\mathrm{SIR}^{c\ (i)}_{1,2}&=\frac{a_1^{(i)}h_i}{I_{o,1}+a_2^{(i)}h_{i+N}},\label{sir12}
\end{align}
\begin{align}
\mathrm{SIR}^{c\ (i)}_{2,2}&=\frac{a_2^{(i)}h_{i+N}}{I_{o,1}+\mu a_1^{(i)}h_i}.\label{sir22}
\end{align}
Notice that the bound performance is the same as previously discussed for the RRS scheme. 
Meanwhile, the feedback/signaling overhead is also the same as for the RRS scheme since the CSI acquisition would take place at the aggregator side, which in turn will only forward back the channel allocation for each MTD and the power control coefficients if necessary. We have assumed that such metadata information is sufficiently small such that the low-rate feedback is error-free. However, practical performances would be upper bounded by our results.

Now, assuming that the receiver can decode successfully (SIR exceeds a threshold $\theta$), we state the following theorem.
\begin{theorem}\label{the3}
	 Given the first MTD allocated has the $i$th largest channel coefficient, $h_i$, and that $u$ MTDs share that same channel, the CRS success probability, $p^{c\ (i)}_{j,u}$, of the $j$th MTD being decoded on a typical channel is approximated as
		\small
	\begin{align}
	p^{c\ (i)}_{j,\!u}\!&\!\approx\!\frac{1}{2}\!-\!\frac{1}{\pi}\!\!\int\limits_{0}^{\infty}\!\!\frac{e^{\!-\chi\cos(\tfrac{\pi}{\alpha})\varphi^{\frac{2}{\alpha}}}\!\!\sin\!\big(\chi\sin(\tfrac{\pi}{\alpha})\varphi^{\frac{2}{\alpha}}\!-\!\varphi B_{j,u}^{(i,K)}\!\big)}{\varphi}\mathrm{d}\varphi,\label{pc}
	\end{align}
	\normalsize
	where $\chi=\phi_1\Gamma\big(1+\tfrac{2}{\alpha}\big)\Gamma\big(1-\tfrac{2}{\alpha}\big)$, and
	\small
	\begin{align}
	B_{1,1}^{(i,K\!)}&\!\!\!=\!\!\frac{\psi(K+1)-\psi(i)}{\theta},\label{B11}\\
	B_{1,2}^{(i,K\!)}&\!\!\!=\!\!\Big(\!\tfrac{a_1^{(i)}}{\theta}\!-\!a_2^{(i)}\!\Big)\psi(K\!+\!1)\!+\!a_2^{(i)}\psi(i\!+\!N)\!-\!\tfrac{a_1^{(i)}}{\theta}\psi(i),\label{B12}\\
	B_{2,2}^{(i,K\!)}&\!\!\!=\!\!\Big(\!\tfrac{a_2^{(i)}}{\theta}\!-\!\mu a_1^{(i)}\!\Big)\!\psi(K\!+\!1)\!+\!\mu a_1^{(i)}\psi(i)\!-\!\tfrac{a_2^{(i)}}{\theta}\psi(i\!+\!N).\label{B22}
	\end{align}
	\normalsize
\end{theorem}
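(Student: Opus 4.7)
The plan is to express $p^{c\,(i)}_{j,u}=\Pr(\mathrm{SIR}^{c\,(i)}_{j,u}>\theta)$ as a CDF of the outside interference $I_{o,1}$ evaluated at a random threshold, and to invert that CDF via the Gil--Pelaez formula. Rewriting each of \eqref{sir12}--\eqref{sir22} in the form $I_{o,1}<X^{(i)}_{j,u}/\theta$, with $X^{(i)}_{1,1}=h_i$, $X^{(i)}_{1,2}=a^{(i)}_1 h_i-\theta a^{(i)}_2 h_{i+N}$, and $X^{(i)}_{2,2}=a^{(i)}_2 h_{i+N}-\theta\mu a^{(i)}_1 h_i$, and using the independence of $I_{o,1}$ from the local order statistics, Fubini combined with Gil--Pelaez yields
\begin{equation*}
p^{c\,(i)}_{j,u}=\tfrac{1}{2}-\tfrac{1}{\pi}\!\int_0^\infty\!\tfrac{1}{\varphi}\,\mathrm{Im}\!\left\{\mathbb{E}\!\left[e^{-\mathbf{i}\varphi X^{(i)}_{j,u}/\theta}\right]\mathbb{E}\!\left[e^{\mathbf{i}\varphi I_{o,1}}\right]\right\}\mathrm{d}\varphi.
\end{equation*}

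Next I would read off the characteristic function of $I_{o,1}$ from \eqref{LIo} via $\mathbb{E}[e^{\mathbf{i}\varphi I_{o,1}}]=\mathcal{L}_{I_{o,1}}(-\mathbf{i}\varphi)$; using the principal branch $(-\mathbf{i}\varphi)^{2/\alpha}=\varphi^{2/\alpha}(\cos(\pi/\alpha)-\mathbf{i}\sin(\pi/\alpha))$ it simplifies to
\begin{equation*}
\mathbb{E}\!\left[e^{\mathbf{i}\varphi I_{o,1}}\right]=e^{-\chi\cos(\pi/\alpha)\varphi^{2/\alpha}}\!\left[\cos\!\big(\chi\sin(\pi/\alpha)\varphi^{2/\alpha}\big)+\mathbf{i}\sin\!\big(\chi\sin(\pi/\alpha)\varphi^{2/\alpha}\big)\right].
\end{equation*}
The approximation step --- the sole source of the $\approx$ in \eqref{pc} --- is to substitute $\mathbb{E}[e^{-\mathbf{i}\varphi X^{(i)}_{j,u}/\theta}]$ by $e^{-\mathbf{i}\varphi\mathbb{E}[X^{(i)}_{j,u}]/\theta}$, a Jensen-type move on a complex exponential. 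Setting $B=\mathbb{E}[X^{(i)}_{j,u}]/\theta$, multiplying out $e^{-\mathbf{i}\varphi B}\mathbb{E}[e^{\mathbf{i}\varphi I_{o,1}}]$, and applying $\sin(A-\varphi B)=\sin A\cos(\varphi B)-\cos A\sin(\varphi B)$ with $A=\chi\sin(\pi/\alpha)\varphi^{2/\alpha}$ collapses the imaginary part to exactly the integrand in \eqref{pc}, giving $B^{(i,K)}_{j,u}=\mathbb{E}[X^{(i)}_{j,u}]/\theta$.

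It remains to evaluate those means. Since $h_1\ge\ldots\ge h_K$ are the decreasing order statistics of $K$ i.i.d.\ $\mathrm{Exp}(1)$ gains, R\'enyi's representation writes $h_j$ as equal in distribution to $\sum_{k=j}^K Y_k/k$ with $Y_k\sim\mathrm{Exp}(1)$ i.i.d., whence $\mathbb{E}[h_j]=H_K-H_{j-1}=\psi(K+1)-\psi(j)$ via $H_n=\psi(n+1)+\gamma$. Substituting $\mathbb{E}[h_i]$ and $\mathbb{E}[h_{i+N}]$ into $B^{(i,K)}_{j,u}$ and grouping the coefficients of $\psi(K+1)$, $\psi(i)$, and $\psi(i+N)$ reproduces \eqref{B11}--\eqref{B22} exactly. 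The main obstacle is precisely the approximation step: $\mathbb{E}[e^{-\mathbf{i}\varphi X/\theta}]=e^{-\mathbf{i}\varphi\mathbb{E}[X]/\theta}$ fails as soon as $X^{(i)}_{j,u}$ has nonzero variance, and the joint law of the ordered pair $(h_i,h_{i+N})$ admits no closed form convenient enough to avoid it; heuristically each $h_j$ is a sum of $K-j+1$ independent scaled exponentials, so its relative fluctuations are small for moderate-to-large $K-i$ and the phase distortion contributed by $\mathrm{Var}(X^{(i)}_{j,u})$ in the integrand is subdominant. A formal error bound would follow from a second-order cumulant expansion, but I would instead validate \eqref{pc} against Monte-Carlo simulation as reported in Section~\ref{results}.
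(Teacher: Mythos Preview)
Your argument is correct and is essentially the paper's approach: the paper invokes \cite[Th.~3]{Lopez.2017} to obtain the Gil--Pelaez representation \eqref{pcApD} with the constants $B_{j,u}^{(i,K)}$ already in place, and then carries out only the algebraic reduction via $(-\mathbf{i})^{2/\alpha}=\cos(\pi/\alpha)-\mathbf{i}\sin(\pi/\alpha)$ and $\mathrm{Im}\{pe^{-q\mathbf{i}}\}=-p\sin(q)$; you have simply unpacked what that citation contains (the Gil--Pelaez inversion, the mean-replacement $\mathbb{E}[e^{-\mathbf{i}\varphi X/\theta}]\to e^{-\mathbf{i}\varphi\mathbb{E}[X]/\theta}$, and the order-statistic means $\mathbb{E}[h_j]=\psi(K+1)-\psi(j)$), so the two proofs coincide step for step.
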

\begin{proof}
Theorem~3 in \cite{Lopez.2017} states that 
\begin{align}
p^{c\ \! (i)}_{j,u}\!\approx\!\frac{1}{2}\!-\!\frac{1}{\pi}\!\int\limits_{0}^{\infty}\!\frac{1}{\varphi}\mathrm{Im}\bigg\{\!\mathcal{L}_{I_{o,1}}(\!-\mathbf{i}\varphi)e^{\!-\!\mathbf{i}\varphi B^{i,K}_{j,u}}\!\bigg\}\mathrm{d}\varphi.\label{pcApD}
\end{align}
	Substituting \eqref{LIo} into \eqref{pcApD} along with some algebraic transformations, e.g., $\chi=\phi_1\Gamma\big(1+\tfrac{2}{\alpha}\big)\Gamma\big(1-\tfrac{2}{\alpha}\big)$, $(-\mathbf{i})^{\frac{2}{\alpha}}=\cos(\tfrac{\pi}{\alpha})-\mathbf{i}\sin(\tfrac{\pi}{\alpha})$ and $\mathrm{Im}\{pe^{-q\mathbf{i}}\}=-p\sin(q)$, renders \eqref{pc}.
\end{proof}

\begin{figure*}[!t]
	\footnotesize
	\begin{align}\label{cor3}
	&\Pr(K_1^c\!=\!k_1^c)\approx\nonumber\\
	&
	e^{-\bar{m}p_{1,1}^r}(\bar{m}p_{1,1}^r)^{k_1^r}\Bigg[\frac{1}{k_1^r!}-\frac{(N-k_1^r+1)\binom{N+1}{k_1^r}\Big((N-k_1^r)!-\Gamma\big(N-k_1^r+1,\bar{m}(1-p_{1,1}^r)\big)\Big)}{(N+1)!}\Bigg]\!+\!\!\sum_{k=N+1}^{2N-1}\sum\limits_{r_1=0}^{f_1(k_1^r)}\!\sum\limits_{r_2=0}^{f_2(k_1^r,r_1)}\!\Bigg[\!\mathbbm{1}\big(k_1^c\!\le\! k\!-\!N\!+\!r_1\!+\!r_2\big)\cdot\! \nonumber\\
	&\!\cdot\binom{2N\!-\!k}{r_1}\!\binom{k\!-\!N}{r_2}\!\binom{k\!-\!N}{k_1^c\!-\!r_1\!-\!r_2}\!\big(\bar{p}_{1,1}^c(k)\big)^{r_1}\!\big(\bar{p}_{1,2}^{c_1}(k)\big)^{r_2}\!\big(\bar{p}_{2,2}^{c_1}(k)\big)^{k_1^c\!-\!r_1\!-\!r_2}\!\big(1\!-\!\bar{p}_{1,1}^c(k)\big)^{2N\!-\!k\!-\!r_1}\!\big(1\!-\!\bar{p}_{1,2}^{c_1}(k)\big)^{k\!-\!N\!-\!r_2}\!\big(1\!-\!\bar{p}_{2,2}^{c_1}(k)\big)^{k\!-\!N\!-\!k_1^c\!+\!r_1\!+\!r_2}\cdot\nonumber\\ 
	&\qquad\cdot\frac{e^{-\bar{m}}\bar{m}^k}{k!}\Bigg]\!+\!\sum_{k=2N}^{\infty}\sum\limits_{r_1=0}^{f_3(k_1^c)}\!\!\!\mathbbm{1}\big(f_5(k_1^c)\!\le\! N\big)\binom{N}{r_1}\binom{N}{f_5(k_1^c)}\big(\bar{p}_{1,2}^{c_2}(k)\big)^{r_1}\big(\bar{p}_{2,2}^{c_2}(k)\big)^{f_5(k_1^c)}\big(1\!-\!\bar{p}_{1,2}^{c_2}(k)\big)^{N\!-\!r_1}\big(1\!-\!\bar{p}_{2,2}^{c_2}(k)\big)^{N\!-\!f_5(k_1^c)}\frac{e^{-\bar{m}}\bar{m}^k}{k!}. 
	\end{align}	
	\hrule
\end{figure*}

\begin{theorem}\label{the4}
	The PMF of the number of active MTDs, $K_1^c$, is approximated by \eqref{cor3} at the top of the next page and below \eqref{cor1}, where $f_z(\cdot)$ for $z=1,2,3$ are given in Theorem~\ref{the2} and
	\begin{align}
	\bar{p}_{1,1}^c(k)&=\frac{1}{2N-k}\sum_{i=k-N+1}^{N}p_{1,1}^{c\ (i)},\label{p11a}\\
	\bar{p}_{j,2}^{c_1}(k)&=\frac{1}{k-N}\sum_{i=1}^{k-N}p_{j,2}^{c\ (i)},\label{pj2c1a}\\
	\bar{p}_{j,2}^{c_2}(k)&=\frac{1}{N}\sum_{i=1}^{N}p_{j,2}^{c\ (i)}.\label{pj2c2a}
	\end{align}	
\end{theorem}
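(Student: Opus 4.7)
The plan is to mirror the case-analysis used to establish Theorem~\ref{the2} for RRS, replacing the order-agnostic probabilities $p_{j,u}^r$ by the CRS, index-averaged counterparts $\bar{p}_{1,1}^c(k)$, $\bar{p}_{j,2}^{c_1}(k)$, $\bar{p}_{j,2}^{c_2}(k)$ defined in \eqref{p11a}--\eqref{pj2c2a}. Conditioning on $K=k$, which is Poisson with mean $\bar{m}$, the total probability splits over three regimes depending on how many of the $N$ channels end up paired.

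For $k\le N$ every MTD is allocated its own orthogonal channel, so no SIC is needed. Because conditioning on ``receiving a dedicated channel'' carries no information about the order-statistic index $i$, the per-MTD success probability collapses to the unordered one $\mathcal{L}_{I_{o,1}}(\theta)=p_{1,1}^r$; this lets me reuse the Poisson thinning/truncation identity from the proof of Theorem~\ref{the2} verbatim and produces the first bracket of \eqref{cor3}. For $N<k<2N$ the CRS pairing convention fixed by \eqref{sir12}--\eqref{sir22} assigns pairs $(h_i,h_{i+N})$ for $i=1,\ldots,k-N$ and leaves $2N-k$ unpaired slots with gains $h_i$, $i=k-N+1,\ldots,N$. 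I then decompose the count as $k_1^c=r_1+r_2+(k_1^c-r_1-r_2)$, where $r_1$ counts successes on unpaired channels, $r_2$ counts successful first-decoded MTDs on paired channels, and the remainder counts successful second-decoded MTDs; the indicator $\mathbbm{1}(k_1^c\le k-N+r_1+r_2)$ enforces the feasibility constraint that the third group does not exceed the number of paired channels, while the product of binomial coefficients $\binom{2N-k}{r_1}\binom{k-N}{r_2}\binom{k-N}{k_1^c-r_1-r_2}$ enumerates which specific slots succeed. For $k\ge 2N$ every channel is paired with $(h_i,h_{i+N})$, $i=1,\ldots,N$, and analogous binomial counting over $r_1$ successful first-decoded and $f_5(k_1^c)$ successful second-decoded MTDs gives the last summand. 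Weighting each regime by $e^{-\bar{m}}\bar{m}^k/k!$ and summing $k$ over the appropriate range yields \eqref{cor3}.

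The crux of the argument---and the reason \eqref{cor3} is an approximation rather than an identity---is the step that replaces the $i$-dependent success probabilities $p_{j,u}^{c\ (i)}$ of Theorem~\ref{the3} by their arithmetic averages over the indices that actually occupy a given slot type. Strictly, once $K=k$ is fixed the per-channel success events are statistically coupled: they share the same ordered fading sample $\{h_1,\ldots,h_k\}$ and are further linked by the common outside interference $I_{o,1}$. Pretending the per-channel outcomes are independent Bernoulli trials with the averaged per-slot success rate is exact for the expected number of successes in each slot class, but only approximate at the joint-PMF level. This is the main obstacle; I would justify it by noting that the $h_i$'s are weakly correlated once $I_{o,1}$ has been marginalized (this marginalization is already folded into each $\mathcal{L}_{I_{o,1}}(\cdot)$ factor appearing in Theorem~\ref{the3}), and I would verify the tightness of the resulting approximation numerically against Monte-Carlo simulation in Section~\ref{results}.
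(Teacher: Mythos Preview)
Your proposal is correct and follows essentially the same approach as the paper: condition on $K=k$, reuse the combinatorial structure from Theorem~\ref{the2}, and replace the $i$-dependent CRS probabilities $p_{j,u}^{c\ (i)}$ by their arithmetic averages \eqref{p11a}--\eqref{pj2c2a} over the appropriate index range, acknowledging that this yields an approximation rather than an identity. Your write-up is in fact more detailed than the paper's own proof, which simply states that averaging over $i$ ``allows us to use the same procedure when deriving Theorem~\ref{the2}'' and then lists which substitution is made in each regime; your explicit discussion of why the coupling among ordered fading samples forces this to be an approximation, and the appeal to numerical validation, are consistent with how the paper handles it.
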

\begin{proof}
	The fact that the success probabilities for the CRS scheme, $p_{j,u}^{c\ (i)}$, depend on $i$ and $k$ complicates heavily the problem. Finding their average with regard the index $i$ allows us to use the same procedure when deriving Theorem~\ref{the2} while attaining an accurate approximation. Now when $N\!<\!K\!<\!2N$ in \eqref{K1}, we substitute each probability value, $p_{1,1}^r$ and $p_{j,2}^r$, respectively by \eqref{p11a} and \eqref{pj2c1a} since the success probability depends on the number of MTDs requiring transmission. When $K\ge 2N$ we do similar by replacing $p_{j,2}^r$ by \eqref{pj2c2a} in \eqref{K1}.
\end{proof}
\subsection{Optimum Scheduling: Is it Feasible?}\label{OPT}
Notice that previous scheduling schemes do not guarantee an optimum performance. This is obvious for the case of RRS since such scheme relies entirely on random pairing, while CRS, even when it exploits CSI for making the pairing decisions, is also sub-optimal. As an example, notice that for $K>2N$ a better scheduling when $\mu=0$ and $I_{o,1}\approx 0$ will probably be the  one pairing the $N$ MTDs with better fading conditions with the ones having the worst fading. This is because such pairing benefits always the MTD to be decoded first, while the MTD to be decoded second is not going to be affected by significant outside interference neither by residual interference from SIC.  Consequently, it is expected that as $\mu$ and/or $I_{o,1}$ take meaningful values, the CRS's performance approaches (but not necessary reaches) the optimum.

The optimum scheduling requires an exhaustive search over all the feasible scheduling outcomes in order to adopt the one offering maximum performance. Notice that the dimension of the search space, $D_{K,N}$, depends on $K$ and $N$ since
\begin{itemize}
	\item If $K\le N$ there is only one feasible allocation, which is granting individual channel resources to all MTDs;
	\item if $N<K\le 2N$, there are $2N-K$ MTDs that will be scheduled alone in their channels. Thus, there is a total of $\binom{K}{2N-K}$ for making such selection, while the remaining $2(K-N)$ MTDs need to be paired between each others to share $K-N$ channels; which can be performed in $\big(2(K-N)-1\big)(K-N)$ different ways; 
	\item if $K>2N$, it is necessary selecting the $2N$ MTDs that will get the channel resources for which there are $\binom{K}{2N}$ possibilities; and also making the pairing by testing all the $N(2N-1)$ different alternatives.
\end{itemize}
Therefore,

\begin{align}
D_{K,N}\!=\!\left\{ \begin{array}{ll}
\!\!\!\!1,&\!\!\! \mathrm{if}\ K\le N\\
\!\!\!\!\binom{K}{2N\!-\!K}\big(2(K\!-\!N)\!-\!1\big)(K\!-\!N),\!&\!\!\! \mathrm{if}\ N\!<\!K\!\le\! 2N\\
\!\!\!\!\binom{K}{2N}N(2N-1), &\!\!\! \mathrm{if}\ K\!>\!2N
\end{array}
\right.,\label{DKN}
\end{align}
while on average the dimension of the search space is $\bar{D}_{N}=\sum_{k=0}^{\infty}D_{k,N}\Pr(K=k)$, which can be stated as in \eqref{DN} at the top of the next page. Notice that $(a)$ came from using \eqref{DKN}, while $(b)$ followed from using the CDF of $K$, evaluating the sums and using the definition of the Kummer confluent hypergeometric function, and performing some algebraic transformations and simplifications.
\begin{figure}[t!]
	\centering
	\subfigure{\includegraphics[width=0.42\textwidth]{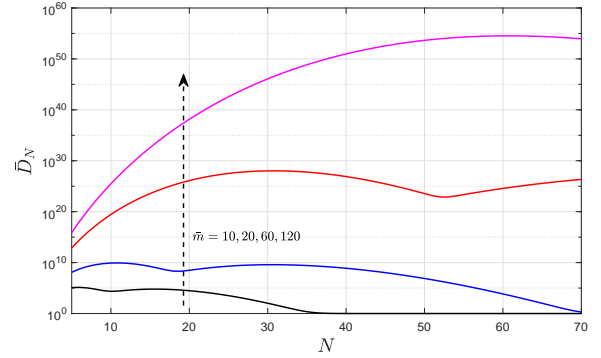}}
	\caption{Average dimension of the search space $\bar{D}_N$ as a function of $N$ for $\bar{m}\in\{10,20,60,120\}$.}		
	\label{Fig2}
\end{figure}
\begin{figure*}[t!]
		\footnotesize
\begin{align}
\bar{D}_{N}&\stackrel{(a)}{=}\Pr(K\le N)+\sum_{k=N+1}^{2N}\binom{k}{2N-k}\big(2(k-N)-1\big)(k-N)\Pr(K=k)+N(2N-1)\sum_{k=2N+1}^{\infty}\binom{k}{2N}\Pr(K=k)\nonumber\\
&\stackrel{(b)}{=}Q(N+1,\bar{m})+\frac{e^{-\bar{m}}\bar{m}^{1+N}}{6(N-1)!}\bigg[3\ _1F_1\Big(1-N,\frac{3}{2},-\frac{\bar{m}}{4}\Big)+\bar{m}(N-1)\ _1F_1\Big(2-N,\frac{5}{2},-\frac{\bar{m}}{4}\Big)\bigg]+\frac{\big(1-e^{-\bar{m}}\big)\bar{m}^{2N}N(2N-1)}{(2N)!}.\label{DN}
\end{align}
\hrule
\end{figure*}

Fig.~\ref{Fig2} shows $\bar{D}_N$ as a function of $N$ for different values of $\bar{m}$. Notice that unless $\bar{m}\ll N$, the dimension of the search space becomes extremely large on average, which makes the exhaustive search unfeasible. Since the scheduling problem appears exactly when the contrary occur, e.g., when $\bar{m}$ is comparable or greater than $N$ since otherwise $1$ MTD per channel is frequently viable, we can conclude that indeed the optimum scheduling through brute force is unfeasible.

In the following section we discuss the overall system performance after analyzing the relaying phase, in which all collected data is forwarded to the BS.
\section{Relaying Phase \& Overall Performance}\label{rel}
In the relaying phase, the aggregator transmits its aggregated data to the BS\footnote{For simplicity, we assume that each aggregator has no buffer and transmits all its aggregated data in one go as in \cite{Lopez.2017,Guo.2017}.}\footnote{Notice that this transmission occurs over only one BS serving channel, therefore, the aggregation topology is reducing the number of BS channel allocations to the MTC devices in a cluster, from $N$ in the case of no aggregation, down to $1$. The importance of such approach is highlighted in \cite{Chen.2014,Shariatmadari.2015}.}. The aggregated data can be successfully decoded by the BS if SIR meets the following condition $\log_2(1+\mathrm{SIR_{rel}}) \ge \tau K_1$, where $\tau=\tfrac{b}{TW}$ in bits per channel use per MTD (bpcu/MTD) with $T$ being the relaying transmission time and $W$ is the available bandwidth for that transmission. $\mathrm{SIR_{rel}}$ is the SIR of the signal received at the BS.  
Then, we write the relaying success probability conditioned on $K_1$ active aggregated MTDs as
\begin{align}
p_{\mathrm{rel}}(K_1)&\!=\!\Pr\big(\mathrm{SIR_{rel}}\ge 2^{\tau K_1}-1\big)\nonumber\\
&\stackrel{(a)}{=}\mathbb{E}_{I_{o,2}}\Big[\Pr\big(h\ge (2^{\tau K_1}-1)I_{o,2}\big)\Big|I_{o,2}\Big]\nonumber\\
&\stackrel{(b)}{=}\mathbb{E}_{I_{o,2}}\Big[\exp\big(-(2^{\tau K_1}-1)I_{o,2}\big)\Big]\nonumber\\
&\stackrel{(c)}{=}\mathcal{L}_{I_{o,2}}(2^{\tau K_1}\!-\!1)\nonumber\\
&\stackrel{(d)}{=}\!\exp\left({\!- \phi_2\Gamma\big(1\!+\!\tfrac{2}{\alpha}\big) \Gamma\big(1\!-\!\tfrac{2}{\alpha}\big) \big(2^{\tau K_1}\!-\!1\big)^{\frac{2}{\alpha}}}\right),\label{pK1}
\end{align}
where $(a)$ comes from using $\mathrm{SIR_{rel}}=g/I_{o,2}$ assuming that $h$ denotes the channel power gain of the link between the aggregator and the BS, $(b)$ comes from using $F_H(h)=1-e^{-h}$, $(c)$ follows after using the definition of the Laplace transform, and finally $(d)$ comes from using \eqref{LIo}.

We now are able to evaluate the average number of successful MTDs, which is an overall performance metric embracing both, the aggregation and relaying phases. That metric evaluates the average number of MTDs being served by the aggregator, whose data can be successfully received by the BS. We can formally write this metric as 
\begin{align}
\bar{K}_{\mathrm{a\&r}}&=\mathbb{E}\Big[K_1 \mathbbm{1}\big(\mathrm{SIR_{rel}}>2^{\tau K_1}-1\big)\Big]\nonumber\\
&=\sum_{k_1=1}^{2N}k_1\Pr(K_1=k_1)p_{\mathrm{rel}}(k_1),
\end{align}
where $\Pr(K_1=k_1)$ is given in \eqref{cor1} and \eqref{cor3} for the RRS and CRS schemes, respectively.

\section{Numerical Results}\label{results}
\begin{figure}[t!]
	\centering
	\subfigure{\includegraphics[width=0.42\textwidth]{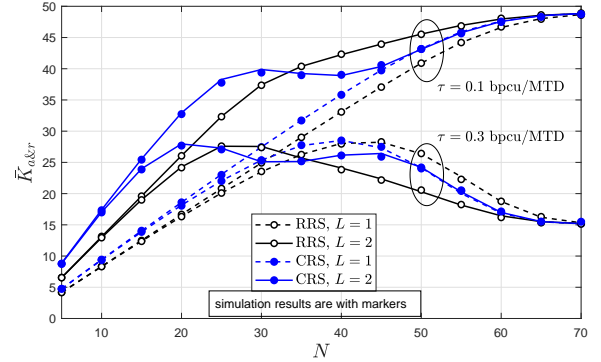}}
	\caption{Average number of successful MTDs as a function of the number of channels for $\tau\in\{0.1,0.3\}$ bpcu/MTD.}		
	\label{figr1}
\end{figure}
Both, simulation and analytical results, are presented in this section to investigate the performance of our hybrid scheme as a function of the system parameters while comparing it with an OMA setup. Unless stated otherwise, results are obtained by setting $\bar{m}=60$, $\alpha=3.6$, $\mu=10\%$, $\theta=1$, $\tau=0.2$ bpcu/MTD and $a_1=a_1^{(i)}=a_2=a_2^{(i)}=\delta/2$. We set $\phi_1=-10$dB, which matches the scenario where all the outside aggregators, serving areas of radius $40$m, are operating with one MTD per channel, while forming a PPP with density $10^{-4.4}/\mathrm{m}^2$. Also, $\delta=1$ such that the average consumed power per orthogonal channel keeps the same for either the $L=1$ or $L=2$ setup, while the interference generated over MTDs sharing the same channel but outside the serving zone keeps similar as in the OMA setup. For the relaying phase we set $\phi_2=-26$dB, which matches the scenario where BSs are serving circular areas of approximately $500$m, while forming a PPP with density $\tfrac{1}{\pi 500}/\mathrm{m}^2$.
Simulation results are generated using 20000 Monte Carlo runs\footnote{Note that  simulations, proposed analytical expressions and approximations fit well in all the cases depicted  Fig.~\ref{figr1}-\ref{figr3}, which validates our findings.}.

Fig.~\ref{figr1} shows that the hybrid scheme for the aggregation phase can improve the spectral efficiency by providing service to a greater number of MTDs when the access demand increases, e.g., $\bar{m}\gtrsim 2N$. This claim comes from \cite{Lopez.2017}, where only the aggregation phase was analyzed. We now extend it by considering the relaying phase, where spite the fact that multiplexing a greater number of MTDs with the same rate degrades the system reliability, the advantage of the hybrid scheme over the purely OMA setup holds. Notice that spectral efficiency for both setups, e.g., $L=1$ and $L=2$, degrades by decreasing $\phi_1$, $\phi_2$ as can be observed in Fig.~\ref{figr2}, and/or by increasing $\tau$. In fact, when $\tau$ increases the degradation of the relaying phase performance due to more data that it is being transmitted could be faster than the increase in the number of active MTDs in the aggregation phase when $N$ increases, thus, the overall performance may worsen as it is shown in Fig.~\ref{figr1} for the case of $\tau=0.3$ bpcu/MTD when $N\gtrsim 45$. On the other hand all the curves tend to overlap when $N$ increases since the probability of having two MTDs sharing the same orthogonal channel decreases so that performance is similar to OMA setup. The fluctuations observed in the CRS scheme for different values of $N$ is consequence of higher dependence/sensitivity on power control coefficients compared to RRS scheme. Thus, a careful selection of those parameters is required for each system setup.
\begin{figure}[t!]
	\centering
	\subfigure{\includegraphics[width=0.42\textwidth]{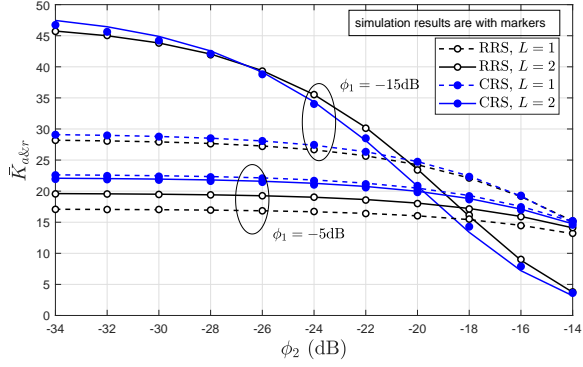}}
	\caption{Relaying phase: average number of successful MTDs as a function of $\phi_2$ for $\phi_1\in\{-15,-5\}$dB and $N=30$.}		
	\label{figr2}
\end{figure}
For the CRS scheme we were able to reach closed-form expressions in \cite[Eqs. (33), (34)]{Lopez.2017} for the power control coefficients while attaining similar reliability for both MTDs sharing the same channel. Notice that when using the RRS scheme, all MTDs have chance to transmit independently of the channel conditions, which could model more realistic scenarios where some MTDs require be served urgently. Also, RRS has a slight decreased performance compared to CRS scheme in the aggregation phase \cite{Lopez.2017}.

Fig.~\ref{fig_n} investigates the required $a_1$ for RRS to attain either similar reliability for both MTDs sharing the same channel or a maximum average number of simultaneously active MTDs in the aggregation phase $\bar{K}$ \cite[Eq. (17)]{Lopez.2017}, as a function of $\phi_1$. As outside interference increases, the required power control coefficient for the first MTD, $a_1$, decreases when similar reliability is the goal, since the performance of the second MTD deteriorates faster and the power control coefficient $a_2$ should increase. Otherwise, when the goal is to maximize the number of simultaneously active MTDs, the performance of the second MTD needs to be sacrificed, even more so when the outside interference increases until the hybrid scheme performs as the OMA setup, e.g., $a_1=1, a_2=0$. Notice that almost all the time, a greater SIC imperfection leads to a reduction in the required $a_1$, decreasing its impact on the performance of the second decoded MTD. Only when reaching $\bar{K}_{\max}$ is the goal and the outside interference is sufficiently large, a  greater SIC imperfection accelerates the transition to OMA by increasing $a_1$. Also, increasing $\phi_1$ deteriorates $\bar{K}$ (see numbered labels in Fig.~\ref{fig_n}).
\begin{figure}[t!]
	\centering
	\subfigure{\includegraphics[width=0.42\textwidth]{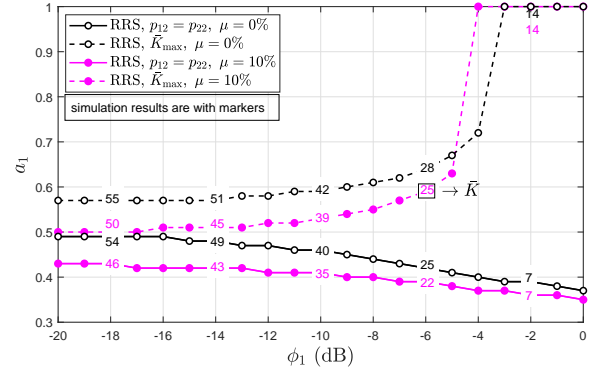}}%
	\caption{$a_1$ ($a_2=\delta-a_1$) as a function of $\phi_1$ for the RRS scheme to attain either similar reliability for both MTDs sharing the same channel or a maximum average number of simultaneously active MTDs in the aggregation phase. $\mu=\{0,10\}\%$ and $N=30$.}	
	\label{fig_n}	
\end{figure}
\begin{figure}[t!]
	\centering
	\subfigure{\includegraphics[width=0.42\textwidth]{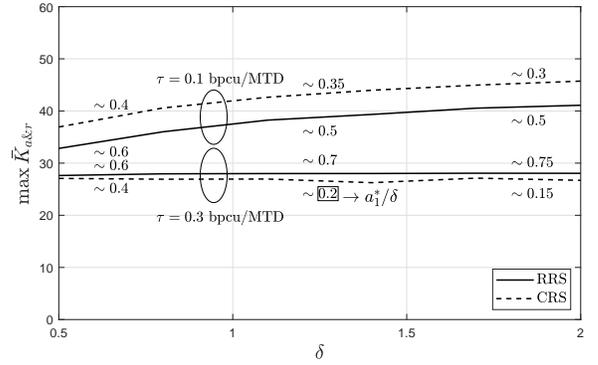}}
	\caption{Maximum average of simultaneously served MTDs as a function of $\delta$, for $N=30$ and $\tau\in\{0.1 0.3\}$ bpcu/MTD.}		
	\label{figr5}
\end{figure}
\begin{figure}[t!]
	\centering
	\subfigure{\includegraphics[width=0.42\textwidth]{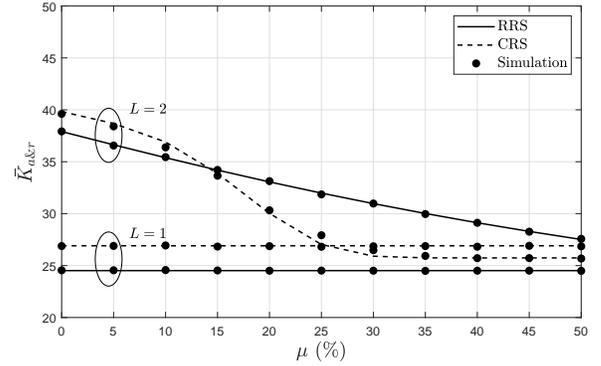}}
	\caption{Average number of successful MTDs as a function of the SIC imperfection parameter for $N=30$.}		
	\label{figr3}
\end{figure}

The reachable maximum average of simultaneously served MTDs as a function total power constraint coefficient, $\delta$, is shown in Fig.~\ref{figr5}. Therein, we find the coefficients $a_1$ and $a_2$ that maximize $\bar{K}_{a\&r}$ such that $a_1+a_2=\delta$. Notice that increasing $\delta$ has a positive impact on the system performance as long as the appropriate values of $a_1$ and $a_2$ are selected, which can be deduced from numbered labels in the figure. For the case of $\tau=0.3$ bpcu/MTD this effect is not evident since the relaying phase is limiting the system performance much more than the aggregation phase. On the other hand, increasing $\delta$ is not always feasible, e.g., due to transmit hardware limitations, or even advisable, e.g., due to the extra interference that might be generated over other OMA networks or because of a low energy efficiency performance. Therefore, the appropriate selection of $\delta$ is of paramount importance.
	
Fig.~\ref{figr3} shows the average number of simultaneously served MTDs as a function of the SIC imperfection coefficient. We set $N=30$ such that each channel is operating with two MTDs almost all the time, which are more sensitive to the interference and imperfection of the SIC. Since SIC is only related with the $L=2$ setup, the OMA setup curves are shown with straight lines.
Of course, when $\mu$ increases, the performance of the $L=2$ setup deteriorates, specifically if the power coefficients are not tuned accordingly. This is because those coefficients work well for certain system parameters but others will be required if they change, e.g. different $\mu$ in this case. It is expected that a smaller $a_1$, hence larger $a_2$, work better as $\mu$ increases as shown previously in Fig.~\ref{fig_n}. It is clear that failing to efficiently  eliminate the inside interference could reduce significantly the benefits from NOMA, and can be a  challenging issue for implementing NOMA in practice.
\vspace{-2mm}
\section{Conclusion}\label{conclusions}
\vspace{-2mm}
We analyzed the data aggregation and relaying in interference-limited mMTC network. We evaluate a hybrid access scheme, OMA-NOMA, while investigating its performance in terms of average number of simultaneously served MTDs. Power control coefficients are incorporated to the practical-interest RRS scheme, while we investigate them numerically. The numerical results also show that our hybrid access scheme aims at providing massive connectivity in scenarios with high access demand. However, inter-cluster interference could reduce significantly the benefits from NOMA, and challenging its implementation in practice. 
\vspace{-2mm}
\appendices 
\vspace{-2mm}
\section{Proof of Theorem~\ref{the1}}\label{App_B}
As in \cite[Th. 1]{Lopez.2017}, let us write the success probabilities as
\begin{align}
p^r_{1,1}&=\mathbb{E}_{I_{o,1}}[\Pr(h>\theta I_{o,1}|I_{o,1})]=\mathbb{E}_{I_{o,1}}\Big[e^{-\theta I_{o,1}}\Big|I_{o,1}\Big],\label{Bp11}\\
p^r_{1,2}\!&=\!\mathbb{E}_{I_{o,1}}\!\Big[\!\Pr\big(\!\max(h_1,h_2)\!-\!\tfrac{\theta a_2}{a_1}\min(h_1,h_2\!)\!\!>\!\tfrac{\theta}{a_1} I_{o,1}|I_{o,1}\big)\!\Big]\nonumber\\
\!&=\!\mathbb{E}_{I_{o,1}}\Big[\Pr\big(v_1>\tfrac{\theta}{a_1} I_{o,1}|I_{o,1}\big)\Big]\nonumber\\
&=1-\mathbb{E}_{I_{o,1}}\Big[F_{V_1}\Big(\tfrac{\theta}{a_1}I_{o,1}\Big)\Big]
\nonumber\\
\!\!\!\!\stackrel{(a)}{=}\!&\!\!\left\{\!\! \begin{array}{ll}
\mathbb{E}\Big[\frac{2a_1}{a_1\!+\!\theta a_2}e^{\!-\!\frac{\theta}{a_1} I_{o,1}}\!-\!\frac{a_1\!-\!\theta a_2}{a_1\!+\!\theta a_2}e^{\!-\!\frac{2\theta}{a_1\!-\!\theta a_2} I_{o,1}}\Big|I_{o,1}\Big]\!,\!& \!\!\mathrm{if} 0\!\le\!\!\tfrac{\theta a_2}{a_1}\!\!<\!1\!\\
\mathbb{E}\Big[\frac{2a_1}{a_1+\theta a_2}e^{\!-\!\frac{\theta}{a_1} I_{o,1}}|I_{o,1}\Big],&\!\!\!\mathrm{otherwise}
\end{array}
\right.\!\label{Bp12}
\end{align}
\begin{align}
p^r_{2,2}\!&=\!\mathbb{E}_{I_{o,1}}\!\Big[\!\Pr\big(\min(h_1,h_2)\!-\!\tfrac{\theta\mu a_1}{a_2}\max(h_1,h_2)\!>\!\tfrac{\theta}{a_2} I_{o,1}|I_{o,1}\!\big)\!\Big]\!\nonumber\\
&=\mathbb{E}_{I_{o,1}}\Big[\Pr\big(v_2>\tfrac{\theta}{a_2} I_{o,1}|I_{o,1}\big)\Big]\nonumber\\
&=1-\mathbb{E}_{I_{o,1}}\Big[F_{V_2}\Big(\tfrac{\theta}{a_2}I_{o,1}\Big)\Big]
\nonumber\\
\label{Bp22}
\!\stackrel{(b)}{=}\!&\!\left\{\!\! \begin{array}{ll}
\mathbb{E}\Big[\frac{a_2-\theta\mu a_1}{a_2+\theta\mu a_1}e^{-\frac{2\theta}{a_2-\theta\mu a_1}I_{o,1}}\Big|I_{o,1}\Big]\!,\!& \mathrm{if}\ 0\!\le\!\tfrac{\theta\mu a_1}{a_2}\!<\!1\!\\
0,&\!\! \mathrm{otherwise}
\end{array}
\right.\!,
\end{align}
where $p^r_{j,u}\!=\!\Pr\big(\mathrm{{SIR}_{j,u}^r}\!>\!\theta\big)$, $v_1=\max(h_1,h_2)-\frac{\theta a_2}{a_1}\min(h_1,h_2)$ and $v_2=\min(h_1,h_2)-\frac{\theta\mu a_1}{a_2}\max(h_1,h_2)$, while $(a)$ and $(b)$ come from using their CDF expressions, which are obtained next.
\begin{align}
F_{V_1}(v_1)&=\Pr(\max(h_1,h_2)-\tfrac{\theta a_2}{a_1}\min(h_1,h_2)\le v_1)\nonumber\\
&=\Pr\big(\max(h_1,h_2)\le v_1+\tfrac{\theta a_2}{a_1}\min(h_1,h_2)\big)\nonumber\\
&=\Pr\big(h_1\le v_1+\tfrac{\theta a_2}{a_1}\min(h_1,h_2)~\bigcap~\nonumber\\
&\qquad\qquad\qquad\qquad h_2\le v_1+\tfrac{\theta a_2}{a_1}\min(h_1,h_2)\big)\nonumber\\
&=\Pr\Big(\min(h_1,h_2)\ge \frac{(h_1-v_1)a_1}{\theta a_2}~\bigcap~\nonumber\\
&\qquad\qquad\qquad\qquad \min(h_1,h_2)\ge \frac{(h_2-v_1)a_1}{\theta a_2}\Big)\nonumber\\
&=\Pr\Big(h_1\!\ge\! \frac{(h_1\!-\!v_1)a_1}{\theta a_2}~\bigcap~h_2\!\ge\! \frac{(h_1\!-\!v_1)a_1}{\theta a_2}~\bigcap\nonumber\\
&\qquad~h_1\ge \frac{(h_2-v_1)a_1}{\theta a_2}~\bigcap~h_2\ge \frac{(h_2-v_1)a_1}{\theta a_2}\Big)\nonumber\\
&=\Pr\Big(h_1(1\!-\!\tfrac{\theta a_2}{a_1})\!\le\! v_1~\bigcap~ h_2\!\ge\! \frac{h_1a_1}{\theta a_2}\!-\!\frac{v_1a_1}{\theta a_2}\bigcap\nonumber\\
&\qquad  h_2\le h_1\tfrac{\theta a_2}{a_1}+v_1~\bigcap~h_2(1-\tfrac{\theta a_2}{a_1})\le v_1\Big). \label{FV1}
\end{align}
\begin{figure}[t!]
	\centering
	\subfigure{\label{Figure4a}\includegraphics[width=0.36\textwidth]{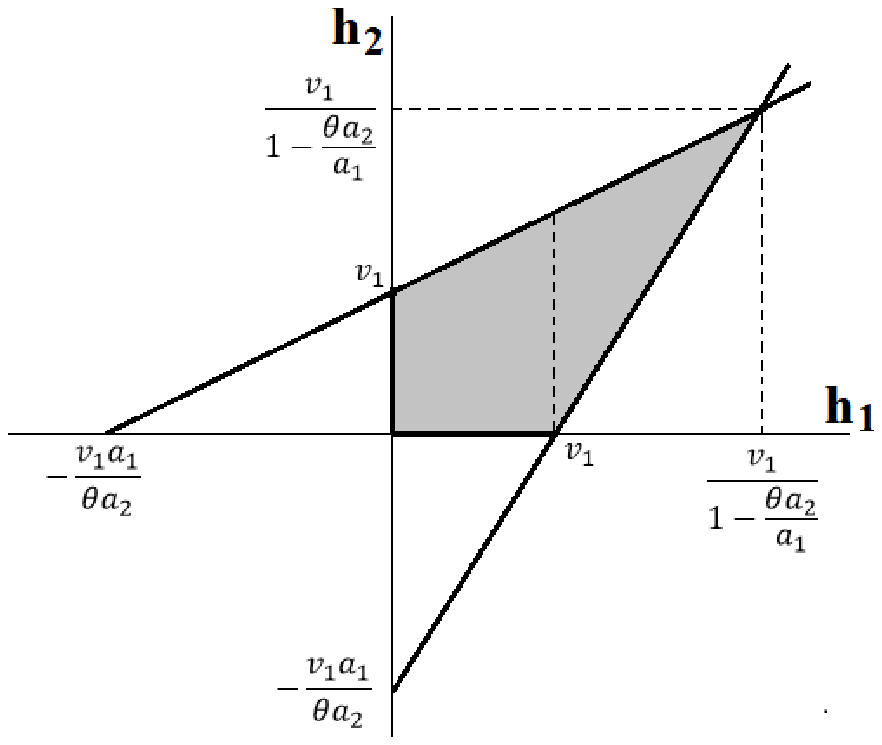}}\ \ \ 
	\subfigure{\label{Figure4b}\includegraphics[width=0.34\textwidth]{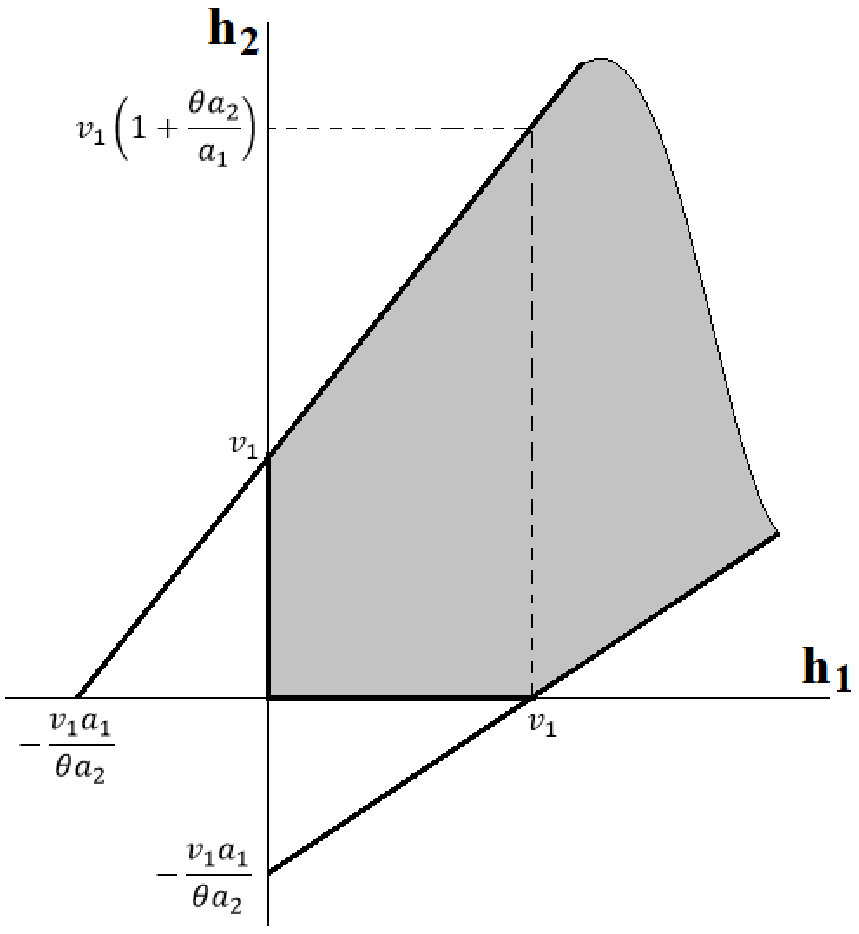}}
	\caption{Region of intersection. a) $0\le\tfrac{\theta a_2}{a_1}<1$ (top), b) $\tfrac{\theta a_2}{a_1}\ge 1$ (bottom).}		
	\label{Figure4}
	\vspace*{-2mm}
\end{figure}

Lets consider the following cases:
\begin{itemize}
	\item $0\le\tfrac{\theta a_2}{a_1}<1$, then we can continue from equation~\eqref{FV1} as follows.
	\begin{align}
	F_{V_1}(v_1)&\!=\!\Pr\Big(h_1\!\le\! \frac{v_1}{1\!-\!\tfrac{\theta a_2}{a_1}}~\bigcap~h_2\!\ge\! \frac{h_1a_1}{\theta a_2}\!-\!\frac{v_1a_1}{\theta a_2}~\bigcap~\nonumber\\
	&\qquad\!\! h_2\le h_1\tfrac{\theta a_2}{a_1}+v_1~\bigcap~h_2\le \frac{v_1}{1-\tfrac{\theta a_2}{a_1}}\Big)  \label{Fv1_c11}
	\end{align}
	and the intersection region is shown in Fig.~\ref{Figure4}a and notice that $h_1,h_2\ge 0$ are also restrictions.
	Therefore we can transform \eqref{Fv1_c11} to attain the result in \eqref{Fv1_c12} at the top of the next page.
	\begin{figure*}[!t]
		\footnotesize
		\begin{align}
	F_{V_1}(v_1)&=\Pr\big(h_1\le v_1~\bigcap~h_2\le \frac{\theta a_2}{a_1} h_1+v_1\big)+\Pr\Big(v_1\le h_1\le \frac{v_1}{1-\frac{\theta a_2}{a_1}}~\bigcap~\frac{a_1}{\theta a_2}h_1-\frac{v_1a_1}{\theta a_2}\le h_2\le \frac{\theta a_2}{a_1} h_1+v_1\Big)\nonumber\\
	&=\int\limits_{0}^{v_1}F_{H_2}(\frac{\theta a_2}{a_1} h_1+v_1)f_{H_1}(h_1)\mathrm{d}h_1+\int\limits_{v_1}^{\frac{v_1}{1-\frac{\theta a_2}{a_1}}}\bigg(F_{H_2}(\frac{\theta a_2}{a_1} h_1+v_1)-F_{H_2}\Big(\frac{h_1a_1}{\theta a_2}-\frac{v_1a_1}{\theta a_2}\Big)\bigg)f_{H_1}(h_1)\mathrm{d}h_1\nonumber\\
	&=\int\limits_{0}^{v_1}\Big(1-e^{-\big(\frac{\theta a_2}{a_1} h_1+v_1\big)}\Big)e^{-h_1}\mathrm{d}h_1+\int\limits_{v_1}^{\frac{v_1}{1-\frac{\theta a_2}{a_1}}}e^{-\big(\frac{h_1a_1}{\theta a_2}-\frac{v_1a_1}{\theta a_2}\big)}e^{-h_1}\mathrm{d}h_1-\int\limits_{v_1}^{\frac{v_1}{1-\frac{\theta a_2}{a_1}}}e^{-\big(\frac{\theta a_2}{a_1} h_1+v_1\big)}e^{-h_1}\mathrm{d}h_1\nonumber\\	
	&=1-\frac{2a_1}{a_1+\theta a_2}e^{-v_1}+\frac{a_1-\theta a_2}{a_1+\theta a_2}e^{-\frac{2v_1a_1}{a_1-\theta a_2}}	 \label{Fv1_c12}
	\end{align}
	\hrule
\end{figure*}
	\item $\tfrac{\theta a_2}{a_1}\ge 1$, then we can continue from equation~\eqref{FV1} as follows.
	\begin{align}
	F_{V_1}(v_1)&\!=\!\Pr\Big(h_1\!\ge \frac{v_1}{1\!-\!\tfrac{\theta a_2}{a_1}}\!~\bigcap~\!h_2\!\ge \frac{h_1a_1}{\theta a_2}-\frac{v_1a_1}{\theta a_2}\!~\bigcap~\nonumber\\
	&\qquad\!\! h_2\le h_1\tfrac{\theta a_2}{a_1}+v_1\!~\bigcap~\!h_2\ge \frac{v_1}{1-\tfrac{\theta a_2}{a_1}}\Big), \label{FV1_c21}
	\end{align}
	Notice that regions $h_1\ge \frac{v_1}{1-\tfrac{\theta a_2}{a_1}}<0$ and $h_2\ge \frac{v_1}{1-\tfrac{\theta a_2}{a_1}}$  do not say nothing new because we already know that $h_1,h_2\ge 0$. Therefore we can write \eqref{FV1_c21} as
	\begin{align}
	F_{V_1}(v_1)&=\Pr\Big(h_2\!\ge\! \frac{h_1a_1}{\theta a_2}\!-\!\frac{v_1a_1}{\theta a_2}\!~\bigcap~\!h_2\!\le\! h_1\frac{\theta a_2}{a_1}\!+\!v_1\Big)\nonumber\\
	&=\Pr\Big( \frac{h_1a_1}{\theta a_2}\!-\!\frac{v_1a_1}{\theta a_2}\le h_2\!\le\! h_1\frac{\theta a_2}{a_1}\!+\!v_1\Big),  \label{FV1_c22}
	\end{align}
	and the intersection region is shown in Fig.~\ref{Figure4}b. Thus, by working on \eqref{FV1_c22} we reach \eqref{FV1_c23} at the top of next page (below \eqref{Fv1_c12}).
	\begin{figure*}[!t]	
			\footnotesize	
	\begin{align}
	F_{V_1}(v_1)&=\int\limits_{0}^{v_1}F_{H_2}\Big(h_1\frac{\theta a_2}{a_1}+v_1\Big)f_{H_1}(h_1)\mathrm{d}h_1+\int\limits_{v_1}^{\infty}\Bigg(F_{H_2}\Big(h_1\frac{\theta a_2}{a_1}+v_1\Big)-F_{H_2}\Big(\frac{h_1a_1}{\theta a_2}-\frac{v_1a_1}{\theta a_2}\Big)\Bigg)f_{H_1}(h_1)\mathrm{d}h_1\nonumber\\
	&=\int\limits_{0}^{\infty}F_{H_2}\Big(h_1\frac{\theta a_2}{a_1}+v_1\Big)f_{H_1}(h_1)\mathrm{d}h_1-\int\limits_{v_1}^{\infty}F_{H_2}\Big(\frac{h_1a_1}{\theta a_2}-\frac{v_1a_1}{\theta a_2}\Big)f_{H_1}(h_1)\mathrm{d}h_1\nonumber\\
	&=\int\limits_{0}^{\infty}\Big(1-e^{-\big(h_1\frac{\theta a_2}{a_1}+v_1\big)}\Big)e^{-h_1}\mathrm{d}h_1-\int\limits_{v_1}^{\infty}\Big(1-e^{-\frac{h_1a_1}{\theta a_2}+\frac{v_1a_1}{\theta a_2}}\Big)e^{-h_1}\mathrm{d}h_1
	=1-\frac{2a_1}{a_1+\theta a_2}e^{-v_1}  \label{FV1_c23}
	\end{align}
	\hrule
\end{figure*}		
\end{itemize}

By combining  \eqref{Fv1_c12} and \eqref{FV1_c23} we attain the general expression for $F_{V_1}(v_1)$, which is
\begin{align}
F_{V_1}(v_1)&\!=\!\!\left\{\!\! \begin{array}{ll}\!\!
1-\frac{2a_1}{a_1\!+\!a_2\theta}e^{-v_1}\!+\!\frac{a_1\!-a_2\!\theta}{a_1\!+\!a_2\theta}e^{-\frac{2a_1v_1}{a_1\!-\!a_2\theta}},&\!\!\mathrm{if}\ 0\le\!\frac{a_2\theta}{a_1}\!<1\\
\\
1-\frac{2a_1}{a_1\!+\!a_2\theta}e^{-v_1},&\!\! \mathrm{if}\ \frac{a_2\theta}{a_1}\!\ge\! 1\\
\end{array}
\right.\!\!\!,
\end{align}
while 
\begin{align}
F_{V_2}(v_2)&\!=\!\left\{ \begin{array}{ll}
1-\frac{a_2\!-\!\theta\mu a_1}{a_2\!+\!\theta\mu a_1}e^{-\frac{2v_2 a_2}{a_2\!+\!\theta\mu a_1}},& \mathrm{if}\ 0\le\tfrac{\theta\mu a_1}{a_2}\!<\!1\\
\\
1,& \mathrm{if}\ \tfrac{\theta\mu a_1}{a_2}\!\ge\! 1\\
\end{array}
\right.\!\!
\end{align}
can be attained by following a similar procedure. 

Finally, by taking the expectation in \eqref{Bp11}, \eqref{Bp12} and \eqref{Bp22} we attain \eqref{p11}, \eqref{p12} and \eqref{p22}, respectively. \hfill\qedsymbol

\begin{figure*}[!t]
	\footnotesize
	\begin{align}	
	\label{K1}
	\Pr(K_1^r\!=\!k_1^r|K)\!=\!\left\{\!\! \begin{array}{ll}\!\!
	0, &K<k_1^r\\
	\!\binom{K}{k_1^r}(p_{1,1}^r)^{k_1^r}(1\!-\!p_{1,1}^r)^{K\!-\!k_1^r}, &k_1^r\!\le\! K\!\le\!N\!\\
	\sum\limits_{r_1=0}^{f_1(k_1^r)}\!\sum\limits_{r_2=0}^{f_2(k_1^r,r_1)}\!\bigg[\mathbbm{1}\big(k_1^r\!\le\! K\!-\!N\!+\!r_1\!+\!r_2\big)\binom{2N\!-\!K}{r_1}\binom{K\!-\!N}{r_2}\binom{K\!-\!N}{k_1^r\!-\!r_1\!-\!r_2}(p_{1,1}^r)^{r_1}(p_{1,2}^r)^{r_2}(p_{2,2}^r)^{k_1^r\!-\!r_1\!-\!r_2}\cdot\\
	\qquad\qquad\qquad\qquad\cdot(1\!-\!p_{1,1}^r)^{2N\!-\!K\!-\!r_1}(1\!-\!p_{1,2}^r)^{K\!-\!N\!-\!r_2}(1\!-\!p_{2,2}^r)^{K\!-\!N\!-\!k_1^r\!+\!r_1\!+\!r_2}\bigg], &N\!<\! K\!<\!2N\!\\ 
	\!\sum\limits_{r_1=0}^{f_3(k_1^r)}\!\!\!\mathbbm{1}\big(k_1^r\!\le\! N\!+\!r_1\big)\binom{N}{r_1}\binom{N}{k_1^r\!-\!r_1}(p_{1,2}^r)^{r_1}(p_{2,2}^r)^{k_1^r\!-\!r_1}(1\!-\!p_{1,2}^r)^{N\!-\!r_1}(1\!-\!p_{2,2}^r)^{N\!-\!k_1^r\!+\!r_1}, &K\!\ge\!2N\!\\ 
	\end{array}
	\right.\!\!
	\end{align}
	\hrule
\end{figure*}
\section{Proof of Theorem~\ref{the2}}\label{App_B2}
For the system model in Section~\ref{system} the outside interference on each channel is independent. Let $K$ be the number of MTDs requiring data transmission, the conditional distribution of $K_1^r$ is given in \eqref{K1} (on the top of the next page) where $f_1(k_1^r)=\min(k_1^r,2N-K)$, $f_2(k_1^r,r_1)=\min(k_1^r-r_1,K-N)$ and $f_3(k_1^r)=\min(k_1^r,N)$. Notice that when $K\le N$ all the MTDs are operating alone in their channels, while when  $N<K<2N$, $2N-K$ MTDs will be operating alone and $K-N$ will be sharing their channels. Additionally, when $K\ge 2N$, the $N$ MTDs with channel resources are sharing their channels. Thus, \eqref{K1} comes from combinatorial and probability theories. The indicator function guarantees operation in the appropriate regions, e.g., $N<K<2N$ and $K\ge N$, according to the particular case. After using $\Pr(K_1^r\!=\!k_1^r)\!=\!\sum_{k=0}^{\infty}\Pr(K_1^r\!=\!k_1^r|k)\Pr(K\!=\!k)$ along with some algebraic manipulations we attain the result in \eqref{cor1}.
\hfill 	\qedsymbol
\bibliographystyle{IEEEtran}
\bibliography{IEEEabrv,references}
\end{document}